\newcommand{\SOCt}{\ensuremath{\textrm{SOC}_2}\xspace}
\title{Picturing Indefinite Causal Structure}
\author{Aleks Kissinger and Sander Uijlen \\
{\small Institute for Computing and Information Sciences, Radboud University} \\
{\small\tt \{aleks,suijlen\}@cs.ru.nl}}
\begin{document}

\maketitle

\begin{abstract}
  Following on from the notion of (first-order) causality, which generalises the notion of being trace-preserving from CP-maps to abstract processes, we give a characterization for the most general kind of map which sends causal processes to causal processes. These new, second-order causal processes enable us to treat the input processes as `local laboratories' whose causal ordering needs not be fixed in advance. Using this characterization, we give a fully-diagrammatic proof of a non-trivial theorem: namely that being causality-preserving on separable processes implies being `completely' causality-preserving. That is, causality is preserved even when the `local laboratories' are allowed to have ancilla systems. An immediate consequence is that preserving causality is separable processes is equivalence to preserving causality for strongly non-signalling (a.k.a. localizable) processes.
\end{abstract}



\section{Causality and non-signalling}

Throughout this extended abstract, we will work in a self-dual compact closed category $\mathcal C$, that is, a symmetric monoidal category which has for every object a pair of morphisms:
\[
\eta_{A} : I \to A \otimes A
\qquad\qquad
\epsilon_{A} : A \otimes A \to I
\]
which we refer to as \textit{cups} and \textit{caps} respectively, satisfying the following `yanking' identities:
\[
(\epsilon_A \otimes 1_A) \circ (1_A \otimes \eta_A) = 1_A
\qquad
\gamma_A \circ \eta_A = \eta_A
\qquad
\epsilon_A \circ \gamma_A  = \epsilon_A
\]
where $\gamma_A : A \otimes A \to A \otimes A$ is the symmetry natural isomorphism. Furthermore, we will adopt string diagram notion for depicting compositions of morphisms (see e.g. \cite{SelingerSurvey}). Using this notion, cups and caps resemble their namesakes:
\[ \eta_A\ :=\ \tikzfig{cup} \qquad\qquad \epsilon_A\ :=\ \tikzfig{cap} \]
and hence the equations above become:
\[
\tikzfig{line_yank}
\qquad\qquad
\tikzfig{cup_swap}  \ \ = \ \ \tikzfig{cup}
\qquad\qquad
\tikzfig{cap_swap}  \ \ = \ \ \tikzfig{cap}
\]
Note that the monoidal unit $I$ is depicted as empty space. Throughout the paper, we will think of morphisms in this category as physical processes of some kind, hence we adopt `process-theoretic' language. Namely, we refer to objects as \textit{systems} and morphisms as \textit{processes}. Furthermore, we give special names to processes from and to the trivial system:
\[ \textit{states}\ :=\ \pointmap\psi \qquad\qquad \textit{effects}\ :=\ \copointmap\pi \]

In addition to the compact closed structure, we also assume $\mathcal C$ has a distinguished effect $d_A : A \to I$ for every system $A$ called \textit{discarding}. This is pictured as:
\[ d_A := \sdiscard \]
and is compatible with $\otimes$ and $I$ as follows:
\[ \tikzfig{disc2} \qquad\qquad \tikzfig{disc0} \]
The utility of the discarding process is it enables us to define \textit{causality}, following \cite{CRCaucat,Chiri1}:

\begin{definition}
  A process $\Psi : A \to B$ is called \textit{causal} if $d_B \circ \Psi = d_A$, or pictorially:
  \[ \sboxdiscard{\Psi}\ =\ \sdiscard \]
\end{definition}

\noindent The motto for causal processes is therefore:
\begin{center}
  \it If we discard the output of a process, it doesn't\\ matter which process happened.
\end{center}

In the category whose objects are quantum state spaces and whose morphisms are CP-maps, causality corresponds to being a trace-preserving CP-map, i.e. a \textit{quantum channel}.

The utility of causality is that it enables us to use diagrams to represent the causal relationships between processes~\cite{CRCaucat}. For example, if we wish to express that Alice can signal to Bob (but not vice-versa!), we can require that a causal process:
\[ \Phi : A_1 \otimes A_2 \to B_1 \otimes B_2 \]
factorises as:
\ctikzfig{factorAB}
where $\Psi_A$ and $\Psi_B$ are also causal. Following \cite{Cnonsig}, we see from this factorisation, that it is indeed impossible for Bob to signal Alice. Indeed, if we discard Bob's output (to which Alice does not have access), the whole process disconnects:
\ctikzfig{factorAB-sep}
We say such a process is \textit{non-signalling} from $B$ to $A$, and write $A \preceq B$. Similarly, a process is non-signalling from $A$ to $B$ if it factorises as:
\ctikzfig{factorBA}
and we say it is simply \textit{non-signalling} if it admits both factorisations.

A typical example of a non-signalling process is a Bell-type scenario. That is, Alice and Bob share some bipartite state, to which they perform local operations:
\begin{equation}\label{eq:factor-bell}
  \tikzfig{factor-bell}
\end{equation}
This clearly admits the two factorisations for $A \preceq B$ and $B \preceq A$:
\ctikzfig{factor-bell2}
so one might ask if in fact \textit{all} non-signalling processes arise this way. In quantum theory, surprisingly the answer is no.


\begin{definition}
  A morphism is called \textit{strongly non-signalling} if it factorises as in equation~\eqref{eq:factor-bell} for some causal morphisms $\Psi_A$, $\Psi_B$, and $\rho$.
\end{definition}

It was shown in \cite{BGNP} that there indeed exist quantum channels which are non-signalling but not strongly non-signalling (conditions referred to as `causal' and `localizable' in \cite{BGNP}, respectively).

\section{Second-order causality}

Recently, frameworks have been proposed to discuss quantum correlations
 which do not necessarily have a fixed causal ordering~\cite{ViennaIndef,QSwitch}.
Both of these frameworks rely on the notion of a `higher-order quantum channel'~\cite{QSupermaps}, i.e. a mapping which sends channels to channels. In this section, we will provide a characterisation of such a map in any compact closed category with discarding.

As is the usual trick in a compact-closed category, we can obtain higher-order maps by first turning first order maps into states by `bending up' the input wire:
\[ \boxmap{\Phi} \ \ \mapsto\ \  \tikzfig{fname} \]
This bending is sometimes called \textit{process-state duality}, which induces a bijection between:
\begin{equation}\label{eq:p-s}
  \left\{\ \textrm{processes } \vphantom{\tilde\Phi}\Phi : A \to B\ \right\}
\ \cong\ 
\left\{\ \textrm{states } \tilde\Phi : I \to A \otimes B\ \right\}
\end{equation}

Hence, we can express a map which sends a process of type $A_1 \to A_2$ to a process of type $B_1 \to B_2$ as a map of the form:
\[
\tikzfig{second-order}
\ ::\ \ \ 
\tikzfig{fname}\ \mapsto\ \tikzfig{Wfname}
\]

\begin{definition}
  A process is called \textit{second-order causal} (SOC) if it sends causal processes (encoded via process-state duality \eqref{eq:p-s}) to causal processes. Diagrammatically, $W$ is SOC if for all $\Phi$:
  \begin{equation}\label{eq:soc}
    \sboxdiscard{\Phi}\ =\ \sdiscard
    \ \ \implies\ \ 
    \tikzfig{Wf-disc}
  \end{equation}
\end{definition}

It is often more enlightening to write SOC maps using `comb' notation (cf. \cite{PaviaIndef}):
\begin{equation}\label{eq:soc-comb}
  \tikzfig{second-order}
\qquad\leadsto\qquad\quad
\tikzfig{Wcomb}
\end{equation}
Then processes are composed from the inside-out, rather than bottom-to-top. Hence, \eqref{eq:soc} becomes:
\[ \sboxdiscard{\Phi}\ =\ \sdiscard
    \qquad\implies\qquad
    \tikzfig{Wcomb-disc} \]

However, processes with just one `hole' are not that interesting, so we will consider a more interesting kind of second-order causal map, which has two holes:

\begin{definition}
  A process:
  \[ W : (A_1 \otimes A_2) \otimes (B_1 \otimes B_2) \to C_1 \otimes C_2 \]
  is called \textit{bipartite second-order causal} ($\textrm{SOC}_2$) if for all causal $\Phi_A, \Phi_B$:
  \ctikzfig{SOC2}
  is causal.
\end{definition}

Particularly simple examples of \SOCt maps simply wire $\Phi_A$ together with $\Phi_B$ in some order:
\begin{equation}\label{eq:def-cs}
  \tikzfig{SOC2-wire} \qquad\qquad \tikzfig{SOC2-wire2}
\end{equation}
However, interestingly, the order that they are wired together is hidden if we treat $W$ as a black box,
and it can be shown (see for example~\cite{ViennaIndef}) that we can even define \SOCt maps which don't admit any fixed causal order.

It is natural to ask whether separate notions of \textit{bipartite} (or more generally, $n$-partite) second-order causal maps is really necessary. We could, after all, define an SOC map as in \eqref{eq:soc-comb} where $A_1 := X_1 \otimes Y_1$ and $A_2 := X_2 \otimes Y_2$, i.e. of the form:
\ctikzfig{SOC-try2}
but this is very restrictive since it needs to send \underline{any} causal map to a causal map, rather than just separable ones. In fact, the simple example of an \SOCt map which wires two processes together in some fixed order is already not SOC. Suppose for instance that we plug a (non-separable) swap map into the leftmost process in~\eqref{eq:def-cs}:
\ctikzfig{SOC2-wire-swap}
Then we introduce a loop, which for most categories $\mathcal C$ (including CP-maps) will immediately kill normalisation, and hence causality.

\begin{definition}\label{def:enough-states}
  We say a category $\mathcal C$ has \textit{enough causal states} if:
  \[
  \left( \ \forall \rho \ \textrm{causal}\ \ .\ \ \kpointmap[style1=point,style2=small box]{\rho}{\Phi} = \kpointmap[style1=point,style2=small box]{\rho}{\Phi'}\  \right)
  \implies 
  \Phi = \Phi'
  \]
\end{definition}

Since $\mathcal C$ is compact closed, we can prove that if $\mathcal C$ has enough states, it also has enough \textit{separable} causal states:
\[
\left( \ \forall \rho_1, \ldots, \rho_n \ \textrm{causal}\ \ .\ \ \tikzfig{enough-ls4}\  \right)
  \implies 
  \Phi = \Phi'
\]
because we can simply apply Definition~\ref{def:enough-states} one input at a time, via:
\[
\tikzfig{enough-ls} \qquad \iff \qquad
\tikzfig{enough-ls2}
\]

\section{Second-order causality and non-signalling}

Now we are ready to prove the main theorem of this extended abstract and give a simple corollary.

\begin{theorem}\label{thm:complete-SOC2}
  If a process in a category with enough causal states is \SOCt, then it is `completely' \SOCt in the sense that, for any causal processes:
  \[
  \Phi_A : A'_1 \otimes A_1 \to A'_2 \otimes A_2 
  \qquad\qquad
  \Phi_B : B'_1 \otimes B_1 \to B'_2 \otimes B_2 
  \]
  the process:
  \ctikzfig{compl-SOC2}
  is causal.
\end{theorem}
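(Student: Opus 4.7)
The plan is to reduce the ``completely \SOCt'' conclusion to the \SOCt hypothesis on $W$ by absorbing the ancilla inputs and outputs of $\Phi_A, \Phi_B$ into honest first-order causal processes. The diagram in the theorem has external inputs $A'_1 \otimes B'_1$ and external outputs $C_1 \otimes C_2 \otimes A'_2 \otimes B'_2$, and the goal is to show that discarding every output yields the discard on $A'_1 \otimes B'_1$.

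The first step I would take is to apply the separable version of enough causal states, established just after Definition~\ref{def:enough-states}, to this equation, viewed as an equality of two effects of type $A'_1 \otimes B'_1 \to I$. This reduces the problem to showing, for arbitrary causal states $\rho_A : I \to A'_1$ and $\rho_B : I \to B'_1$, that plugging $\rho_A \otimes \rho_B$ into the ancilla input wires and discarding all outputs yields the scalar $1_I$. The right-hand side is immediately $1_I$ by causality of $\rho_A$ and $\rho_B$.

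The key move is then purely diagrammatic: on the $A$-side, $\rho_A$ sits on the $A'_1$ input of $\Phi_A$ and the discard $d_{A'_2}$ sits on the $A'_2$ output, so I can absorb both into $\Phi_A$ to produce a new first-order process
\[
\Phi'_A := (d_{A'_2} \otimes 1_{A_2}) \circ \Phi_A \circ (\rho_A \otimes 1_{A_1}) : A_1 \to A_2,
\]
and symmetrically $\Phi'_B : B_1 \to B_2$ on the $B$-side. Each is a composition of causal morphisms (causal states on some inputs, discards on some outputs, and a causal map in the middle) and hence is itself causal. What remains of the diagram is then exactly $W$ with $\Phi'_A, \Phi'_B$ wired into its two \SOCt slots, with its $C_1, C_2$ outputs discarded. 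By the \SOCt hypothesis on $W$, this equals $1_I$, completing the reduction.

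The only obstacle I anticipate is bookkeeping: in the comb picture the $A_1, A_2$ and $B_1, B_2$ wires connecting $\Phi_A, \Phi_B$ to $W$ are realized via cups, so the absorption of $\rho_A$, $\rho_B$, $d_{A'_2}$, $d_{B'_2}$ into $\Phi'_A, \Phi'_B$ must be justified by sliding these states and effects along bent wires using the yanking identities. In any compact closed category with discarding this is routine, so no ingredient beyond separable enough causal states and the definition of \SOCt is actually required.
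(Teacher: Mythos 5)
Your proof is correct and follows essentially the same route as the paper's: the paper likewise plugs causal states into the ancilla inputs $A'_1, B'_1$ and discards the ancilla outputs of $\Phi_A,\Phi_B$ to obtain causal first-order ``local laboratory'' processes (your $\Phi'_A,\Phi'_B$), applies the \SOCt hypothesis to $W$, and then concludes by enough separable causal states. The only cosmetic difference is that the paper treats $W$ in comb form with a global input, so it also strips that input with a third causal state $\rho_W$ and compares scalars, whereas you invoke the causality equation for $W$ applied to $\Phi'_A,\Phi'_B$ directly; nothing essential changes.
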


\begin{proof}
  For any causal states $\rho_A, \rho_B$, the following processes are causal:
  \begin{equation}\label{eq:causal-labs}
    \tikzfig{causal-labs}
  \end{equation}
  which can be seen just by discarding the respective outputs and applying causality of $\Phi_A, \Phi_B, \rho_A$ and $\rho_B$ individually. Then, if $W$ is \SOCt, plugging in the causal maps~\eqref{eq:causal-labs} yields a causal map. Hence, for any $\rho_W$, we have:
  \ctikzfig{SOC2-pf1}
  Since the process above agrees with discarding for all $\rho_A, \rho_B, \rho_W$:
  \[ \tikzfig{disc-AWB}\ =\ \emptydiag \]
  we can conclude, using the fact that $\mathcal C$ has enough causal states (and hence enough separable causal states) that:
  \ctikzfig{compl-SOC2-pf}
\end{proof}

We can now show that \SOCt processes not only preserve causality for separable processes, but also for strongly non-signalling processes:

\begin{corollary}\label{cor:strongsign}
  If a process $W$
  is \SOCt, then it sends every causal, strongly non-signalling process:
  \[ \Phi : A_1 \otimes B_1 \to A_2 \otimes B_2 \]
  to a causal process.
\end{corollary}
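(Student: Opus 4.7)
The plan is to reduce the statement to a single application of Theorem~\ref{thm:complete-SOC2}. Since $\Phi$ is causal and strongly non-signalling, by definition it factorises in the form of equation~\eqref{eq:factor-bell}: there exist causal processes $\Psi_A : A_1 \otimes X \to A_2$, $\Psi_B : Y \otimes B_1 \to B_2$ and a causal state $\rho : I \to X \otimes Y$ such that
\[
\Phi \ =\ (\Psi_A \otimes \Psi_B) \circ (\mathrm{id}_{A_1} \otimes \rho \otimes \mathrm{id}_{B_1}).
\]

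The conceptual heart of the argument is to view $\Psi_A$ and $\Psi_B$ as \emph{extended local laboratories} in the sense of Theorem~\ref{thm:complete-SOC2}. Taking $A'_1 := X$, $A'_2 := I$, $B'_1 := Y$, $B'_2 := I$, the maps $\Psi_A$ and $\Psi_B$ have exactly the typings $A'_1 \otimes A_1 \to A'_2 \otimes A_2$ and $B'_1 \otimes B_1 \to B'_2 \otimes B_2$ required by the theorem, and both remain causal. Theorem~\ref{thm:complete-SOC2} then yields that the process obtained by plugging $\Psi_A$ and $\Psi_B$ into $W$'s two holes, with the ancilla wires $X$ and $Y$ left dangling, is a causal map of type $X \otimes Y \to C_1 \otimes C_2$.

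Composing this causal map with the causal state $\rho : I \to X \otimes Y$ closes off the ancilla wires and preserves causality (composition of causal maps is causal). By a purely graphical rewrite --- absorbing $\rho$ into the bipartite state obtained by bending the inputs of $\Phi$ --- the resulting process coincides with the image of $\Phi$ under $W$ via process-state duality, giving the corollary.

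The only real work lies in the final bookkeeping step: verifying that the two string diagrams, one with $\rho$ feeding Alice's and Bob's ancillas before $\Psi_A,\Psi_B$ are plugged into $W$, the other with a single bent state $\tilde\Phi$ plugged into $W$, really coincide. This is precisely the content of the strong non-signalling factorisation~\eqref{eq:factor-bell} translated via process-state duality, and needs only sliding boxes along wires together with the yanking identities. The heavy lifting is therefore entirely in Theorem~\ref{thm:complete-SOC2}, and this corollary is essentially packaging.
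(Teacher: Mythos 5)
Your proof is correct and follows essentially the same route as the paper: factor $\Phi$ via the strong non-signalling form \eqref{eq:factor-bell}, apply Theorem~\ref{thm:complete-SOC2} to the local parts $\Psi_A,\Psi_B$ (with the shared-state wires playing the role of the ancillas), and then use causality of $\rho$ to close them off. The paper's proof is just this argument compressed into a single diagrammatic equation, so your write-up is a faithful, slightly more explicit version of it.
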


\begin{proof}
  If $\Phi$ is strongly non-signalling, it factors as in \eqref{eq:factor-bell}. Then:
  \ctikzfig{W-nonsig}
\end{proof}

In~\cite{QSwitch} it is shown that preserving causality for product channels is equivalent to preserving causality for all non-signalling channels. This can be shown straight-forwardly in the concrete case of CP-maps using the fact that non-signalling channels always arise as affine combinations of separable channels. One could therefore extend the proof above to work for all non-signalling processes if we replace $\rho$ with a `pseudo-state' given by, e.g.
\[ \tikzfig{gray-bistate} \  :=\  \sum_i r_i\ \ketbra{i}{i} \otimes \ketbra{i}{i} \]
for (possibly negative) coefficients $r_i$ summing to $1$. Then we still have:
\[ \tikzfig{gray-bistate-disc}\ =\ \emptydiag \]
and we can furthermore realise any affine combination of separable CP-maps (hence any non-signalling channel) via:
\ctikzfig{factor-bell-pseudo}
Then the proof of Corollary \ref{cor:strongsign} proceeds identically, replacing $\rho$ with $\bm r$. However, this has the undesirable property that we have to go outside of the category of `physically realisable' processes to get this (non-positive) pseudo-state $\bm r$. Whether one can give a fully diagrammatic proof without resorting to such tricks is an open question.

\section{Acknowledgements}
We would like to thank everyone from the 2016 Joint McGill-Oxford Workshop on Causality in Quantum Foundations for fruitful discussions as well as the anonymous reviewers for their useful comments and references. The research leading to these results has received funding from the European Research Council under the European Union's Seventh Framework Programme (FP7/2007-2013) / ERC grant agreement no. 320571


\bibliographystyle{eptcs}
\bibliography{main}

\end{document}